\newtheorem{thm}{Theorem}[section]
\newtheorem{prop}[thm]{Proposition}
\newtheorem{lem}[thm]{Lemma}
\newtheorem{cor}[thm]{Corollary}
\newtheorem{remark}[thm]{Remark}
\begin{document}
\title{\Large{{\bf The limit distribution in the $q$-CLT for $q \ge 1$ is unique and can not have a compact support}}}
\author{Sabir Umarov$^{1}$ 
and Constantino Tsallis$^{2,3}$}
\date{}
\maketitle
\begin{center}
$^{1}$ {\it   Department of Mathematics and Physics, University of New Haven, 300
Boston Post Road, West Haven, CT 06516, USA  \\} 
$^{2}$ {\it Centro Brasileiro
de Pesquisas Fisicas, and National Institute of Science and Technology for Complex Systems,
Xavier Sigaud 150, 22290-180 Rio de Janeiro-RJ, Brazil}\\
$^{3}$ {\it Santa Fe Institute,
1399 Hyde Park Road, Santa Fe, NM 87501, USA}
\end{center}
\begin{abstract}
{ In a paper by Umarov, Tsallis and Steinberg (2008),
a generalization of the Fourier transform, called the $q$-Fourier transform, was introduced and applied for the proof of a $q$-generalized central limit theorem ($q$-CLT). 
Subsequently, 
Hilhorst illustrated  (2009 and 2010) that the
$q$-Fourier transform for $q>1$, 
is not invertible in the space of density functions. Indeed, using an invariance principle, he constructed a
family of densities with the same $q$-Fourier transform and noted
that "as a consequence, the $q$-central limit theorem
falls short of achieving its
stated goal". The distributions constructed there have compact support. We prove now that the limit distribution in the $q$-CLT is unique and can not have a compact
support.  This result excludes all the possible counterexamples which can be constructed using the invariance principle and fills
the gap mentioned by Hilhorst.}
\end{abstract}
{\it Keywords: $q$-central limit theorem, $q$-Fourier transform,
$q$-Gaussian, invariance principle}
\renewcommand{\baselinestretch}{2}
\section{Introduction}
The $q$-central limit theorem ($q$-CLT), proved in
\cite{UmarovTsallisSteinberg2008} (see also
\cite{UmarovTsallisGellMannSteinberg2010}),  deals with sequences of random variables of the form
\begin{equation} 
\label{ZN} 
Z_N = \frac{S_N-N\mu_q}{N^{\frac{1}{2(2-q)}}},
\end{equation}
where $S_N=X_1+ \dots +X_N$, {the} random variables $X_1, \dots X_N$
{being} identically distributed and  $q$-independent, 
$\mu_q=\int x [f(x)]^qdx,$ and $1\le q <2.$ Here $f(x)$ is the density function of the random variable $X_1.$
Without loss of generality one can assume that $\mu_q=0$. Three types of
$q$-independence were discussed in paper
\cite{UmarovTsallisSteinberg2008}. Namely, identically distributed
random variables $X_N$ are $q$-independent (see
\cite{UmarovTsallisSteinberg2008}) if
\begin{align} \label{q-ind1}
\mbox{Type I:} ~~~~
&F_q[X_1+\dots+X_N](\xi)=F_q[X_1](\xi)\otimes_q\dots\otimes_q
F_q[X_N](\xi);\\ \label{q-ind2} \mbox{Type II:} ~~~~
&F_{q}[X_1+\dots+X_N](\xi)=F_q[X_1](\xi)\otimes_{q_1}\dots\otimes_{q_1}
F_q[X_N](\xi); \\ \label{q-ind3} \mbox{Type III:} ~~~~
&F_q[X_1+\dots+X_N](\xi)=F_{q_1}[X_1](\xi)\otimes_{q_1}\dots\otimes_{q_1}
F_{q_1}[X_N](\xi),
\end{align}
if these relationships hold for all $N\ge 2$ and $\xi \in
(-\infty,\infty)$; $q_1=\frac{1+q}{3-q}$.
Here the operator $F_q$ is the $q$-Fourier transform ($q$-FT) defined
as
\begin{equation}
\label{Fq} F_q[X_1](\xi) = \tilde{f}_{q}(\xi):= \int_{-\infty}^\infty
\frac{f(x)\,dx}{[1+i(1-q)x\xi f^{q-1}(x)]^{\frac{1}{q-1}}} \,,
\end{equation}
{with $q>1$}. If $q \to 1+0,$ then $F_q[X_1](\xi) \to F[X_1](\xi) =\int_{-\infty}^{\infty} f(x) e^{i x \xi} dx,$ coinciding with the Fourier transform of $f.$

The q-CLT states that if  $X_1, \dots X_N$
are identically distributed and  $q$-independent random variables, then the sequence $Z_N$ in \eqref{ZN} weakly converges to a random variable with the $q_{-1}$-Gaussian density; see \cite{UmarovTsallisSteinberg2008} for details.

The invertibility of $q$-FT in the class of $q$-Gaussian densities is established in \cite{UmarovTsallis2008} and in the space of hyper-functions in \cite{PlastinoRocca2012a,PlastinoRocca2012b}. However, using a specific invariance principle Hilhorst \cite{Hilhorst2010,Hilhorst2009} showed  that  $q$-FT is not invertible in the entire space of densities.  He constructed a family of densities containing the $q$-Gaussian and with the same $q$-FT. Any density of this family except the $q$-Gaussian has a compact support. In the present note we establish that a limit distribution in $q$-CLT  can not have a compact support. This fact implies that all the distributions with compact support in Hilhorst's counterexamples can not be a limiting distribution in the $q$-CLT, except the $q$-Gaussian density. However, deformations used in the invariance principle with functions $H(\xi), \, H(0)=0,$ lead to distributions,  which have noncompact support and share the same asymptotic behaviour at infinity as the $q_{-1}$-Gaussian. We prove that the limit distribution $Z_{\infty}$ and {any of} its $H$-deformation has the same $(2q-1)$-variance if and only if {the deforming function}  $H(\xi)$ is identically zero. {Using this fact and intrinsic properties of $q$-independent random variables we prove the uniqueness of the limit distribution of the scaling limit of $q$-independent random variables. This fact rules out all the possible counterexamples indicated by Hilhorst in his paper \cite{Hilhorst2010}. Thus, the $q$-FT is used only for the existence of limiting distribution, while intrinsic properties of $q$-independent random variables supply the uniqueness of this limiting distribution. We note that the inverse $q$-FT is nowhere required in the present proof.}

Now let us recall some facts about the $q$-algebra, {$q$-exponential} and
{$q$-logarithmic} functions. By definition, the {\it $q$-sum} of two numbers is defined as
$x \oplus_q y = x+y+(1-q)xy$. The $q$-sum is commutative, associative, 
recovers the usual summing operation if $q=1$ (i.e. $x \oplus_1 y = x+y$), and preserves 
$0$ as the neutral element (i.e. $x \oplus_q 0 = x$).  The {\it $q$-product} for $x,y$
is defined by the binary relation $x \otimes_q y = [x^{1-q}+y^{1-q}-1]^{{1}\over{1-q}}.$
This operation also commutative, associative, recovers the usual 
product when $q=1$, and preserves $1$ as the unity. The $q$-product is defined only when $x^{1-q}+y^{1-q} \ge 1$. The $q$-exponential and $q$-logarithmic functions are respectively defined as (see for instance \cite{UmarovTsallisSteinberg2008})
\begin{align*}
\exp_q(x)&=[1+(1-q)x]_+^{{1}\over {1-q}} \;\;\;(\exp_1(x)=e^x) \,,\\
\end{align*}
{and}
\begin{align*}
\ln_q (x)&= \frac{x^{1-q}-1}{1-q}, \, x>0 \;\;\;(\ln_1(x)=\ln x).
\end{align*}
It is easy to see (see  \cite{UmarovTsallisSteinberg2008}) that for the $q$-exponential, the relations
$\exp_q(x \oplus_q y) = \exp_q(x) \exp_q(y)$ and
$\exp_q{(x+y)}=\exp_q(x) \otimes_q \exp_q(y)$ hold. In terms of
$q$-log these relations can be equivalently rewritten as follows:
$\ln_q (x \otimes_q y)=\ln_q x + \ln_q y$, and $\ln_q (x y)=\ln_q x
\oplus_q \ln_q y.$ It follows from the definition of $q$-logarithm that if $1<q_1<q_2,$
then
\begin{align} \label{ln01}
&\ln_{q_{_{2}}}(x) \ge \frac{q_1-1}{q_2-1}
\ln_{q_{_1}}(x) ~ ~ \mbox{for\, all} ~ x>1.
\end{align}
For $q > 1$ the $q$-exponential is defined for all $x<\frac{1}{q-1}$
and blows up at the point $x=\frac{1}{q-1}.$ The $q$-exponential can
also be extended to the complex plane and it is bounded on the
imaginary axis: $|\exp_q{(iy)}| \leq 1.$ Moreover, $|\exp_q{(iy)}|
\rightarrow 0$ if $|y| \rightarrow \infty.$ Using the
$q$-exponential function, the $q$-FT of $f$ can be represented in
the form
\begin{equation}
\label{identity2} \tilde{f}_q(\xi) = \int_{-\infty}^{\infty}f(x)\,
\exp_q{(ix \xi [f(x)]^{q-1})}\,dx.
\end{equation}
We refer the reader to the papers
\cite{UmarovTsallisSteinberg2008,UmarovTsallisGellMannSteinberg2010,
TsallisQueiros2007,QueirosTsallis2007,UmarovTsallis2007,UmarovQueiros2010,NelsonUmarov2010,JaureguiTsallis2010a,ChevreuilPlastinoVignat2010,JaureguiTsallis2011,JaureguiTsallisCurado2011} for various properties and applications of the $q$-FT. Also, functions of the form $\exp_q{(-\beta x^2)}$ ($\beta >0$) will be hereafter referred to as $q$-Gaussians.

At this point, before addressing the technical aspects of the present problem, let us remind why the $q$-CLT may be very relevant in physics and other disciplines. It is common belief that the ubiquity of Gaussians in nature and elsewhere is due to the classical CLT. Indeed, this theorem provides a mathematical basis for observing the Gaussian attractors under quite general circumstances involving many {\it independent} (or quasi-independent) random variables. Analogously, also $q$-Gaussians emerge ubiquitously in nature and elsewhere, which strongly suggests the existence of a wide class of many {\it correlated} random variables whose corresponding attractors are $q$-Gaussians instead of Gaussians. Such experimental and theoretical examples include anomalous diffusion in type-II superconductors \cite{AndradeSilvaMoreiraNobreCurado2010} and granular matter \cite{CombeRichefeuStasiakAtman2015}, non-Gaussian momenta distributions for cold atoms in optical lattices \cite{Lutz2003,DouglasBergaminiRenzoni2006,LutzRenzoni2013}, dirty plasma \cite{LiuGoree2008}, trapped atoms \cite{DeVoe2009},  area-preserving maps \cite{TirnakliBorges2016}, high-energy physics \cite{WilkWong2013}, probabilistic models \cite{BergeronCuradoGazeauRodrigues2016}, to mention but a few (see \cite{Biblio}).

\section{On the support of the limit distribution}
For the sake of simplicity we consider a continuous and symmetric
about zero density function $f$ of a random variable $X_1.$ Other
cases can be considered in a similar manner with appropriate care.
Denote $\lambda(x)=x[f(x)]^{q-1},$ where $1 \le q<2.$ Since $f$ is
symmetric, it suffices to consider $\lambda(x)$ only for positive
$x.$ Suppose the maximum value of $\lambda$ is $m$ and
$x_m>0$ is the rightmost point where $\lambda$ attains its maximum, i.e. $m=\max_{0<x \le a}\{x[f(x)]^{q-1}\}=x_m [f(x_m)]^{q-1}. $ Let
\[
\tau_{\ast} = 
\begin{cases} 
\frac{1}{m(q-1)}, &  \mbox{if } 0 <q<2, \\
\infty, & \mbox{if } q= 1. 
\end{cases} 
\]
\begin{prop}\label{Prop:PW}
Let $f$ be a continuous symmetric density function with $supp \, f
\subseteq [-a,a].$ Then the $q$-FT of $f$ satisfies the following
estimate
\begin{equation} \label{prop1:est}
|\tilde{f}_q(\eta-i \tau)| \le \exp_q(x_m M_q \tau),
\end{equation}
where $\eta \in (-\infty,\infty),$ $\tau < \tau_{\ast},$ $M_q=
\max_{[0,a]}\{[f(x)]^{q-1}\},$ and $x_m$ is the rightmost point where
$xf^{q-1}$ attains its maximum $m.$
\end{prop}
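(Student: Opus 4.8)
The plan is to start from the $q$-exponential representation \eqref{identity2} of the $q$-Fourier transform and substitute a complex argument $\xi = \eta - i\tau$. Writing
\[
\tilde f_q(\eta - i\tau) = \int_{-a}^{a} f(x)\,\exp_q\!\bigl(ix(\eta - i\tau)[f(x)]^{q-1}\bigr)\,dx
= \int_{-a}^{a} f(x)\,\exp_q\!\bigl(ix\eta[f(x)]^{q-1} + x\tau[f(x)]^{q-1}\bigr)\,dx,
\]
I would use the factorization $\exp_q(u+v) = \exp_q(u)\otimes_q\exp_q(v)$ together with the bound $|\exp_q(iy)|\le 1$ on the imaginary axis. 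The point is that the real part of the argument is $\lambda(x)\tau = x[f(x)]^{q-1}\tau$, which for $x\in[0,a]$ is at most $m\tau$; since $\tau < \tau_\ast = \tfrac{1}{m(q-1)}$, the quantity $1+(1-q)\lambda(x)\tau$ stays positive, so $\exp_q$ is well-defined (this is exactly why the restriction $\tau<\tau_\ast$ appears) and $\exp_q(\lambda(x)\tau)$ is a genuine nonnegative real number throughout the integration.

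Next I would estimate the modulus inside the integral. The cleanest route is to bound $\bigl|\exp_q(ix\eta[f(x)]^{q-1} + x\tau[f(x)]^{q-1})\bigr|$ by $\exp_q(x\tau[f(x)]^{q-1})$ — this uses that $\exp_q$ of a complex number with nonnegative real part has modulus controlled by $\exp_q$ of that real part (a monotonicity/boundedness property of the $q$-exponential on the shifted imaginary line, analogous to $|e^{a+ib}| = e^a$). Then, pulling the integral inside the modulus,
\[
|\tilde f_q(\eta - i\tau)| \le \int_{-a}^{a} f(x)\,\exp_q\!\bigl(x\tau[f(x)]^{q-1}\bigr)\,dx
\le \int_{-a}^{a} f(x)\,dx \cdot \sup_{x\in[0,a]}\exp_q\!\bigl(x\tau[f(x)]^{q-1}\bigr).
\]
Here I would use monotonicity of $\exp_q$ on its domain: since $\exp_q$ is increasing and $x[f(x)]^{q-1}\le m$ while separately $x\le x_m$ can be combined with $[f(x)]^{q-1}\le M_q$ — one has $x\tau[f(x)]^{q-1} \le x_m M_q\tau$ by bounding $x$ by the location $x_m$ of the maximum and $[f(x)]^{q-1}$ by $M_q$ (the precise packaging to reach exactly $x_m M_q\tau$ rather than $m\tau$ is the one combinatorial choice to get right). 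Finally $\int f = 1$ kills the prefactor and yields \eqref{prop1:est}.

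The main obstacle I anticipate is the complex-analytic step: justifying $|\exp_q(u+iv)| \le \exp_q(u)$ for real $u < \tfrac{1}{q-1}$ and real $v$, i.e. that the $q$-exponential on a vertical line to the left of its pole has modulus bounded by its value at the real point of that line. For the ordinary exponential this is the equality $|e^{u+iv}|=e^u$, but for $q>1$ one has $\exp_q(u+iv) = [1+(1-q)(u+iv)]^{1/(1-q)}$ and the modulus is $\bigl((1+(1-q)u)^2 + (1-q)^2 v^2\bigr)^{1/(2(1-q))}$; since the exponent $1/(1-q)$ is negative, increasing $|v|$ \emph{decreases} the modulus, so indeed $|\exp_q(u+iv)| \le (1+(1-q)u)^{1/(1-q)} = \exp_q(u)$ — I would spell this out carefully, being attentive to the sign of $1-q$ and to the fact that $1+(1-q)u>0$ is guaranteed precisely by $\tau<\tau_\ast$ and $\lambda(x)\le m$. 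A secondary point worth a sentence is that the argument of $\exp_q$ in the integrand has the form $\lambda(x)(\tau + i\eta)\cdot$(sign bookkeeping), so one must confirm the \emph{real} part is exactly $\lambda(x)\tau \ge 0$ on $[0,a]$ (and handle $x<0$ by symmetry of $f$, which makes $\lambda$ odd and the relevant real part $\lambda(x)\tau$ still bounded by $m\tau$ in absolute value after using $|{\cdot}|$). Everything else is monotonicity of $\exp_q$ and $\int f = 1$.
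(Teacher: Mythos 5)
Your proof is correct and follows essentially the same route as the paper's: the paper establishes $|\exp_q(u+iv)|\le \exp_q(u)$ by factoring the denominator into a real factor times $\left(\exp_q(i\,\cdot)\right)^{-1}$ and invoking $|\exp_q(iy)|\le 1$, which is exactly the modulus computation you carry out directly (the negative exponent $1/(1-q)$ making larger $|v|$ decrease the modulus), after which both arguments finish with monotonicity of $\exp_q$ and $\int f=1$. One small repair to your final packaging: you cannot justify the last bound by ``bounding $x$ by $x_m$'' on $[0,a]$, since the maximum of $\lambda(x)=x[f(x)]^{q-1}$ may be attained at an interior point and $x$ can exceed $x_m$; the correct chain, for which you already have every ingredient, is $x\tau[f(x)]^{q-1}\le m\tau = x_m[f(x_m)]^{q-1}\tau\le x_m M_q\tau$.
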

\begin{proof} For $f$ with $supp \, f \subseteq [-a,a],$ equation
(\ref{Fq}) takes the form
\begin{equation}
\label{Fq10} \tilde{f}_{q}(\xi)= \int_{-a}^a
\frac{f(x)dx}{\left[1+i(1-q)x\xi f^{q-1}(x)\right]^{\frac{1}{q-1}}}.
\end{equation}
Let $\xi=\eta+i\tau$ where $\eta=\Re(\xi)$ is the real part of $\xi$
and $\tau=\Im(\xi)$ is its imaginary part. We assume that $\eta \in
(-\infty,\infty)$ and $|\tau| < \frac{1}{m(q-1)}.$ Then for the
denominator of the integrand in (\ref{Fq10}) one has
\begin{align} \label{denom}
\Big[ 1+i(1-q)x(\eta &-i\tau) f^{q-1}(x) \Big]^{\frac{1}{q-1}}= \Big[1+i(1-q)\eta
f^{q-1}(x)+(1-q)\tau x f^{q-1}(x) \Big]^{\frac{1}{q-1}} \notag \\&
= \Big[1+(1-q)\tau x f^{q-1}(x)\Big]^{\frac{1}{q-1}} \Big[1+i\frac{(1-q)\eta
f^{q-1}(x)}{1-(1-q)\tau x f^{q-1}(x)} \Big]^{\frac{1}{q-1}} \notag\\
&= \left(\exp_q(\tau x f^{q-1}(x))\right)^{-1}\left(\exp_q(
i\frac{(1-q)\eta f^{q-1}(x)}{1-(1-q)\tau x f^{q-1}(x)}
)\right)^{-1}.
\end{align}
Using the inequality $|\exp(iy)| \le 1$ valid for all $y \in
(-\infty,\infty)$ if $q>1,$ it follows from \eqref{denom} that
\begin{align*} 
\Big|1+i(1-q)x(\eta &+i\tau) f^{q-1}(x)\Big|^{\frac{1}{q-1}} \ge
\left(\exp_q(\tau x f^{q-1}(x))\right)^{-1},
\end{align*}
or
\begin{align} \label{denom10}
\Big|1+i(1-q)x(\eta &+i\tau) f^{q-1}(x)\Big|^{\frac{1}{1-q}} \le
\left(\exp_q(\tau x f^{q-1}(x))\right).
\end{align}
Now, \eqref{Fq10} together with \eqref{denom10} and $f(x)$ being a
density function, yield \eqref{prop1:est}. 
\end{proof}

\begin{remark}
Proposition \eqref{Prop:PW} can be viewed as a generalization of the
well known Paley-Wiener theorem. Indeed, if $q=1$ then
\eqref{prop1:est} takes the form
\begin{equation} \label{PW}
|\tilde{f}(\eta-i\tau)| \le \exp(a \tau), \, \eta+i\tau \in
\mathcal{C},
\end{equation}
which represents the Paley-Wiener theorem for continuous density
functions.
\end{remark}
Inequality \eqref{PW} can be used for estimation of the size of the
support of $f.$ Consider an example with
$f(x)=(2a)^{-1}{\mathcal{I}}_{[-a,a]}(x),$ where
$\mathcal{I}_{[-a,a]}(x)$ is the indicator function of the interval
$[-a,a].$ The Fourier transform of this function is
$\tilde{f}(\xi)=(a\xi)^{-1}\sin (a\xi),$ $M_q=M_1=1,$ and $x_m=a.$
Therefore, we have $|\tilde{f}(-i\tau)| \le e^{a\tau}, \tau >0.$
The latter yields 
\[
\displaystyle{2a \ge 2 \sup_{\tau >0} \frac{\ln|\tilde{f}(-i\tau)|}{\tau}},
\] 
which gives an estimate from below for the size $d(f)=2a$ of the support of $f.$
This idea can be used to estimate the size 
of the support of $f$ using the $q$-FT and Proposition
\ref{Prop:PW}. Namely, inequality \eqref{prop1:est} with $\eta=0$
implies
\begin{equation} \label{suppsize}
\displaystyle{d(f)=2a \ge 2x_m \ge \frac{2}{M_q} \sup_{0<\tau<\tau_{\ast}} \frac{\ln_q |\tilde{f}_q(-i\tau)|}{\tau}}.
\end{equation}
We notice that the integrand in the integral
\[
\tilde{f}_q(-i\tau)=\int_{-a}^{a}\frac{f(x)dx}{[1-(q-1)\tau x
f^{q-1}(x)]^{\frac{1}{q-1}}}
\]
is strictly grater than $f(x)$ if $\tau>0,$ implying
$|\tilde{f}_q(-i\tau)|>1,$ since $f$ is a density function.
Therefore, the right hand side of \eqref{suppsize} is positive and
gives indeed an estimate of the size of the support of $f$ from
below.

Let $f_N(x)=f_{S_N}(x)$ be the density function of
$S_N=X_1+\dots+X_N,$ where $X_1,\dots,X_N$ are $q$-independent
random variables with the same density function $f=f_{X_1}$ whose
support is $[-a,a].$ We show that the $q$-independence condition can
not reduce the support of $f_N$ to an interval independent of $N.$
More precisely, $d(f_N)$ increases at the rate of $N$ when $N \to
\infty.$

\begin{thm}\label{Prop:PW1}
Let  $X_1,\dots,X_N$ {be} $q$-independent of any type I-III random
variables all having the same density function $f$ with $supp \, f
\subseteq [-a,a].$ Then, for the size of the density $f_N$ of $S_N$,
there exists a constant $K_q >0$ such that the estimate
\begin{equation} \label{prop1:estfN}
\displaystyle{d(f_N) \ge K_q N  \sup_{0<\tau<\tau_{\ast}} \frac{\ln_q |\tilde{f}_q(-i\tau)|}{\tau}}
\end{equation}
holds.
\end{thm}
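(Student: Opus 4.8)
The plan is to apply the support‑size estimate \eqref{suppsize} — equivalently Proposition~\ref{Prop:PW} — directly to the density $f_N$ of $S_N$, and then to use the $q$-independence of the $X_i$ to replace the $q$-FT of $f_N$ by that of $f$, picking up the factor $N$. First, since each $X_i$ lies in $[-a,a]$ almost surely, $S_N$ lies in $[-Na,Na]$, so $f_N$ is again a continuous symmetric density of compact support, and $|F_q[S_N](-i\tau)|>1$ for the relevant $\tau$, exactly as in the paragraph preceding \eqref{suppsize}. Hence, writing $M_q^{(N)}=\max_{[0,Na]}\{[f_N(x)]^{q-1}\}$, letting $m_N$ be the maximum of $x[f_N(x)]^{q-1}$ and $\tau_\ast^{(N)}=[m_N(q-1)]^{-1}$, \eqref{suppsize} applied to $f_N$ gives
\[
d(f_N)\ \ge\ \frac{2}{M_q^{(N)}}\ \sup_{0<\tau<\tau_\ast^{(N)}}\ \frac{\ln_q|F_q[S_N](-i\tau)|}{\tau}.
\]

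Next I would bring in the $q$-independence. For Type~I, \eqref{q-ind1} together with identical distribution gives $F_q[S_N](\xi)=\tilde f_q(\xi)\otimes_q\cdots\otimes_q\tilde f_q(\xi)$ ($N$ factors), so, applying $\ln_q$ and the identity $\ln_q(x\otimes_q y)=\ln_q x+\ln_q y$,
\[
\ln_q F_q[S_N](\xi)\ =\ N\,\ln_q\tilde f_q(\xi),
\]
first for real $\xi$ and then, by analytic continuation into the strip in which both sides are analytic (the integrand of \eqref{Fq} has no singularity while $1-(q-1)\tau x f^{q-1}(x)>0$, i.e. for $|\tau|<\tau_\ast$, resp. $|\tau|<\tau_\ast^{(N)}$), at $\xi=-i\tau$ for $0<\tau<\min(\tau_\ast,\tau_\ast^{(N)})$. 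For Types~II and III the same computation with $\otimes_{q_1}$ produces the analogous identity with $\ln_{q_1}$; since $q<q_1$ and the transforms exceed $1$ on the imaginary axis, \eqref{ln01} trades $\ln_{q_1}$ for $\ln_q$ up to the constant $\tfrac{q_1-1}{q-1}$, and $\ln_{q_1}\exp_q(y)\le y$ keeps Proposition~\ref{Prop:PW} usable. Substituting the identity into the previous display,
\[
d(f_N)\ \ge\ \frac{2N}{M_q^{(N)}}\ \sup_{0<\tau<\min(\tau_\ast,\tau_\ast^{(N)})}\ \frac{\ln_q|\tilde f_q(-i\tau)|}{\tau}.
\]

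It then remains to absorb the $f_N$-dependent quantities into a constant $K_q$ depending on $q$ alone: (i) $M_q^{(N)}=\|f_N\|_\infty^{q-1}$ must be shown to stay bounded as $N\to\infty$ (summation spreads the density and does not increase its sup‑norm in the relevant sense), so that $2/M_q^{(N)}$ is bounded below; and (ii) the restriction of the supremum from $(0,\tau_\ast)$ to $(0,\min(\tau_\ast,\tau_\ast^{(N)}))$ must cost only a bounded factor, which I would obtain by comparing $\tau_\ast^{(N)}$ with $\tau_\ast$ through the blow‑up characterization of $\tau_\ast^{(N)}$ and the bound $\tilde f_q(-i\tau)\le\exp_q(\tau m)$ of Proposition~\ref{Prop:PW} for $f$. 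This last point is the main obstacle: the estimate delivered by Proposition~\ref{Prop:PW} is self‑referential in $f_N$ — its very window of validity $(0,\tau_\ast^{(N)})$ is governed by $f_N$ — so turning it into a bound that is genuinely linear in $N$, rather than merely unbounded, is the delicate step; the remaining work (the three $q$-independence types, the $\exp_q/\ln_q$ algebra, the analytic continuation) is routine.
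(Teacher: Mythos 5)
Your overall route is the same as the paper's: apply the support estimate \eqref{suppsize} to $f_N$, use the $q$-independence relation at the purely imaginary point $\xi=-i\tau$ together with the additivity $\ln_q(x\otimes_q y)=\ln_q x+\ln_q y$ to extract the factor $N$, convert $\ln_{q_1}$ into $\ln_q$ via \eqref{ln01} for types II--III, and absorb the $f_N$-dependent maximum $M_{q,N}$ into the constant $K_q$. The parts you actually carry out coincide with the paper's proof; the only cosmetic difference is that you claim the equality $\ln_q \widetilde{(f_N)}_q(-i\tau)=N\ln_q\tilde f_q(-i\tau)$ by analytic continuation, whereas the paper derives the one-sided bound $\ln_q|\widetilde{(f_N)}_q(-i\tau)|\ge N\ln_q|\tilde f_q(-i\tau)|$ from $|z-r|\ge|z|-r$; since for symmetric $f$ the transforms are real and exceed $1$ on the imaginary axis, this is immaterial, and the paper too uses the $q$-independence identity off the real axis without further comment.

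As submitted, however, your argument stops short of the stated inequality at exactly the two points you label (i) and (ii). Point (ii) is the substantive one: your derivation only yields a bound with $\sup_{0<\tau<\min(\tau_\ast,\tau_\ast^{(N)})}$, which is weaker than \eqref{prop1:estfN}, whose supremum runs over the fixed interval $(0,\tau_\ast)$ determined by $f$; you acknowledge this but do not close it. For comparison, the paper disposes of (i) with the one-line claim $M_{q,N}\le M_q$ ``from probabilistic arguments'' and never raises (ii) at all: in \eqref{suppsize10} the supremum is simply written over $(0,\tau_\ast)$, i.e., the admissible strip for $f_N$ is tacitly identified with that of $f$. So you have spotted a point the paper glosses over, but to turn your sketch into a proof you would still need an argument there --- for instance, that for $0<\tau<\tau_\ast$ the $q$-independence relation exhibits $\widetilde{(f_N)}_q(-i\tau)$ as a finite $N$-fold $q$-product of $\tilde f_q(-i\tau)$, so the Paley--Wiener-type estimate for $f_N$ can in fact be run on all of $(0,\tau_\ast)$; neither your proposal nor the paper spells this out. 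A smaller loose end on your side (shared with the paper, which calls it ``obvious''): for type III the identity produces $\ln_{q_1}|\tilde f_{q_1}(-i\tau)|$ on the right, as in \eqref{suppsize30}, and passing from the $q_1$-transform of $f$ to the quantity $\ln_q|\tilde f_q(-i\tau)|$ appearing in \eqref{prop1:estfN} requires an additional comparison that you do not supply.
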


\begin{proof} Using formula \eqref{suppsize} one has
\begin{equation} \label{suppsize10}
\displaystyle{d(f_N) \ge \frac{2}{M_{q,N}}\sup_{0<\tau<\tau_{\ast}} \frac{\ln_q |\widetilde{({f}_{N})}_{q}(-i\tau)|}{\tau}},
\end{equation}
where $M_{q,N}=\max_{x\in[-Na,Na]} f_N^{q-1}(x).$ It is clear from
probabilistic arguments that $M_{q,N} \le M_q$ for all $N\ge 2.$
Therefore, it follows from \eqref{suppsize10} that
\begin{equation} \label{suppsize15}
\displaystyle{d(f_N) \ge \frac{2}{M_{q}}\sup_{0<\tau<\tau_{\ast}} \frac{\ln_q |\widetilde{({f}_{N})}_{q}(-i\tau)|}{\tau}},
\end{equation}
Let $X_N$ be $q$-independent of type I (see \eqref{q-ind1}). Making
use of the inequality $|z-r|\ge |z|-r,$ which holds true for any
complex $z$ and positive integer number $r,$ one has
\begin{align*}
\left|\widetilde{({f}_{N})}_{q}(-i\tau)\right|&= \left|\tilde{f}_q(-i\tau) \otimes_q
\dots \otimes_q \tilde{f}_q (-i \tau)\right|
\\
&=
\left|[N
\left(\tilde{f}_q(-i\tau)\right)^{1-q}-(N-1)]^{\frac{1}{1-q}} \right|
\\&\ge \left[N
\left| \tilde{f}_q(-i\tau) \right|^{1-q}-(N-1)\right]^{\frac{1}{1-q}}\\&= \left|\tilde{f}_q(-i\tau) \right|
\otimes_q \dots \otimes_q \left|\tilde{f}_q (-i \tau) \right|.
\end{align*}
Taking $q$-logarithm of both sides in this inequality and using the
property $\ln_q(g \otimes_q h)=\ln_q g + \ln_q h,$ one obtains
\begin{equation}
\label{suppsize16} \ln_q \left|\widetilde{({f}_{N})}_{q}(-i\tau)\right| \ge N
\ln_q \left|\tilde{f}_q(-i\tau)\right|.
\end{equation}
Now estimate \eqref{prop1:estfN} follows from inequalities
\eqref{suppsize15} and \eqref{suppsize16}.

Similarly, for random variables independent of type III, we have
\begin{equation}
\label{suppsize30} \ln_{q} \left|\widetilde{({f}_{N})}_{q}(-i\tau)\right| \ge N
\ln_{q_{_1}} \left|\tilde{f}_{q_{_1}}(-i\tau) \right|.
\end{equation}

For random variables $X_N$ independent of type II, equation
\eqref{suppsize16} takes the form
\begin{equation}
\label{suppsize20} \ln_{q_{_1}} \left|\widetilde{({f}_{N})}_{q}(-i\tau)\right|
\ge N \ln_{q_{_1}} \left|\tilde{f}_q(-i\tau)\right|.
\end{equation}
Since $1<q<q_1$ and $\frac{q-1}{q_1-1}=\frac{3-q}{2},$ making use of
inequalitiy \eqref{ln01},
and taking into account that
$|\widetilde{({f}_{N})}_{q}(i\tau)| \ge 1,$ one has
\begin{equation*} 
\ln_{q_{_1}} \left| \widetilde{({f}_{N})}_{q}(-i\tau) \right| \ge
\frac{(3-q)}{2} \ln_{q} \left| \widetilde{({f}_{N})}_{q}(-i\tau) \right|,
\end{equation*}
which implies
\begin{equation}
\label{suppsize21} \ln_q \left| \widetilde{({f}_{N})}_{q}(-i\tau) \right| \ge \frac{2 N}{3-q}
\ln_q |\tilde{f}_q(-i\tau)|.
\end{equation}

Both \eqref{suppsize30} and \eqref{suppsize21} obviously imply
estimate \eqref{prop1:estfN}. 
\end{proof}

\begin{cor}
\label{cor1} Let  $X_1,\dots,X_N$ be $q$-independent of any type
I-III random variables all having the same density function $f$ with
$supp \, f \subseteq [-a,a].$ If the sequence $Z_N$ has a
distributional limit random variable in some sense, then this random
variable can not have a density with compact support. Moreover, due
to the scaling present in $Z_N,$ the support of the limit variable
is the entire set of real numbers.
\end{cor}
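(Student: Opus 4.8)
The plan is to read the conclusion off Theorem~\ref{Prop:PW1} together with the explicit normalization in \eqref{ZN}, without any further appeal to the $q$-Fourier transform. Write $C=\sup_{0<\tau<\tau_{\ast}}\tau^{-1}\ln_q|\tilde f_q(-i\tau)|$. As observed in the discussion following \eqref{suppsize}, for every $\tau>0$ the integrand defining $\tilde f_q(-i\tau)$ is strictly larger than $f(x)$, so $|\tilde f_q(-i\tau)|>1$ and hence $\ln_q|\tilde f_q(-i\tau)|>0$ throughout $(0,\tau_{\ast})$; picking any $\tau_0\in(0,\tau_{\ast})$ gives $C\ge\tau_0^{-1}\ln_q|\tilde f_q(-i\tau_0)|>0$. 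Substituting into \eqref{prop1:estfN} we obtain $d(f_N)\ge K_q C\,N$, i.e. the support of the density $f_N$ of $S_N$ is an interval whose length grows at least linearly in $N$.

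Next I would transport this through the rescaling $Z_N=S_N/N^{1/(2(2-q))}$ (taking $\mu_q=0$, which is without loss of generality). Its density is $f_{Z_N}(z)=b_N f_N(b_N z)$ with $b_N=N^{1/(2(2-q))}$, so $\mathrm{supp}\,f_{Z_N}=b_N^{-1}\,\mathrm{supp}\,f_N$; since $f$, and hence every $f_N$ and every $f_{Z_N}$, is symmetric about $0$, this support is an interval $[-B_N,B_N]$ with
\[
B_N=\frac{d(f_N)}{2 b_N}\ge\frac{K_q C}{2}\,N^{\,1-\frac{1}{2(2-q)}}.
\]
For $1\le q<3/2$ the exponent $1-\frac{1}{2(2-q)}$ is strictly positive, so $B_N\to\infty$: no fixed bounded interval can contain $\mathrm{supp}\,f_{Z_N}$ for all large $N$, and therefore any reasonable distributional limit is supported on all of $(-\infty,\infty)$. (For $q\in[3/2,2)$ this same normalization instead keeps $d(f_{Z_N})$ bounded, and $d(f_{Z_N})\to0$ when $q>3/2$, so a compactly supported $q$-independent family would force the scaling limit to be degenerate at $0$; this case is excluded once one insists, as the statement does, on a genuine limiting \emph{random variable}.)

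The one delicate point — and the reason the statement is hedged with ``in some sense'' — is the last implication: weak convergence by itself does not forbid the densities $f_{Z_N}$ from carrying a vanishingly small but spatially distant tail, so knowing only $B_N\to\infty$ does not immediately pin down the support of the limit. The fix is to commit to a mode of convergence under which supports pass to the limit; it suffices, for instance, to use the convergence actually delivered by the $q$-CLT, or to add the mild requirement $\liminf_N\inf_{[-R,R]}f_{Z_N}>0$ for each fixed $R$. Granting any such notion, the chain ``$d(f_N)\gtrsim N$'' $\Rightarrow$ ``$B_N\to\infty$'' $\Rightarrow$ ``support of the limit $=(-\infty,\infty)$'' closes, and the only genuinely new ingredient beyond Theorem~\ref{Prop:PW1} is the elementary scaling identity $f_{Z_N}(z)=b_N f_N(b_N z)$. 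I therefore expect the write-up to be short, the only real care being the choice of convergence notion and the admissible range of $q$.
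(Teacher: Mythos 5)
Your argument is correct and is exactly the route the paper takes: the paper's entire proof of this corollary is the single sentence that the claim ``follows immediately from \eqref{prop1:estfN} upon letting $N \to \infty$,'' so the scaling computation $B_N \ge \tfrac{1}{2}K_q C\, N^{1-\frac{1}{2(2-q)}}$ you supply is precisely the omitted content. You are in fact more careful than the paper: your observations that this exponent is positive only for $q<3/2$ (at $q=3/2$ the rescaled support stays merely bounded below, and for $q>3/2$ it shrinks, forcing a degenerate limit), and that weak convergence alone does not transport supports to the limit, identify genuine gaps in the paper's one-line justification, and the fixes you propose are the right ones.
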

The proof obviously follows immediately from \eqref{prop1:estfN}
upon letting $N \to \infty.$

\section{On the variance and quasivariance of a limit distribution}
Let $1 \le q <2$ and a random variable $X$ with a density function $f(x)$ has zero $q$-mean ($\mu_q(X)=0$) and a finite quasivariance 
\begin{equation} \label{qv}
QV(X)=\nu_{2q-1}(X) \sigma_{2q-1}^2(X) =\int (x-\mu_q)^2
[f(x)]^{2q-1}dx,
\end{equation} 
where  $\nu_{2q-1}(X), \ \mu_q,$ and $\sigma_{2q-1}(X)$ are defined as 
\begin{equation} \label{nu000}
\nu_q=\nu_q(X)=\int[f(x)]^q dx, \ \mu_q=\int x [f(x)]^q dx,
\end{equation}
and
\begin{equation} \label{sigma000}
\sigma_{2q-1}^2=\sigma_{2q-1}^2(X)=\int (x-\mu_q)^2
\frac{[f(x)]^{2q-1}}{\nu_{2q-1}}dx. 
\end{equation}
Note that if $q=1$ then $\nu_{q}=1$ and $\sigma_{2q-1}^2 =Var(X),$ the variance of $X,$ implying $QV(X)=Var(X).$ As above, without loss of generality, we assume that $\mu_q=0.$
If $X$ and $Y$ are $q$-independent (of any type I-III) random variables, then for their quasivariances the relation
\begin{equation} \label{qv_100}
QV(X+Y)=QV(X)+QV(Y)
\end{equation}
holds. 
To see the validity of this fact one can use the formula $(F_q[X])^{''}(0)=-q QV(X)$ (see \cite{TsallisPlastinoAlvares-Estrada2009}) and the definition of $q$-independence \eqref{q-ind1}-\eqref{q-ind2}.
Taking into account that the density of $aX$ for a constant $a>0$ is $a^{-1}f(x/a),$ one can easily verify that  \cite{UmarovTsallisSteinberg2008}
 \begin{equation} \label{sigma001}
\sigma_{2q-1}(aX)={a^2} {\sigma_{2q-1}(X)}.
\end{equation}

Let $X_1$ be a random variable with the $q$-Gaussian density 
$$
G_q(\beta, x)=\frac{\sqrt{\beta}}{C_q}e_q^{-\beta x^2}, \, \beta>0,
$$ 
where $C_q$ is the normalizing constant \cite{UmarovTsallisSteinberg2008}. The direct calculation shows that $\sigma_{2q-1}^2(X_1)=\beta^{-1}.$ The sequence of identically distributed $q$-Gaussian variables $X_1,\dotso,X_N$ is $q$-independent of type II if the density of $X_1+\dots+X_N$ is the $G(N^{-\frac{1}{2-q}}\beta, x);$ (see \cite{UmarovTsallisSteinberg2008}). Using \eqref{sigma001}, one can see    
that
\begin{equation} \label{sigma}
\sigma_{2q-1}^2(X_1+\dots+X_N) = N^{\frac{1}{2-q}}\sigma_{2q-1}^2(X_1),
\end{equation}
and
\begin{align} 
\sigma_{2q-1}^2(Z_N)&=\frac{1}{N^{\frac{1}{2-q}}} \sigma_{2q-1}^2(X_1+\dots+X_N) 
={\sigma_{2q-1}^2(X_1)},   
   \quad \mbox{for all} \ N \ge 1.   \label{sigma_zn}
\end{align}

If $q=1$ then \eqref{sigma} reduces to the known relationship $Var (X_1+\dots+X_N)=NVar(X_1)$ valid for variances of independent and identically distributed (i.i.d.) random variables $X_1,\dotso,X_N.$ In this case  \eqref{sigma_zn} becomes $Var(Z_N)=Var(X_1),$ valid for rescaled sums of i.i.d. random variables. 
In other words the equality 
\begin{equation}
\label{qv000}
QV(Z_N)=QV(X_1)
\end{equation}
holds if $q=1.$

We notice that relations \eqref{qv_100} and \eqref{sigma}  imply
\begin{equation} \label{nu}
\nu_{2q-1}(X_1+\dots+X_N) = \frac{\nu_{2q-1}(X_1)}{N^{\frac{q-1}{2-q}}}.
\end{equation}
Indeed, due to \eqref{qv_100} 
\begin{align*}
\sigma_{2q-1}^2(X_1+\dots+X_N) &=  \frac{QV(X_1+\dots + X_N)}{\nu_{2q-1}(X_1+\dots+X_N)} 
\\
&=\frac{QV(X_1)+\dots + QV(X_N)}{\nu_{2q-1}(X_1+\dots+X_N)} 
\\
&= \frac{N QV(X_1)}{\nu_{2q-1}(X_1+\dots+X_N)} 
\\
&=N \frac{\nu_{2q-1}(X_1) \sigma_{2q-1}^2(X_1)}{\nu_{2q-1}(X_1+\dots+X_N)}.
\end{align*}
The latter and equality \eqref{sigma} imply \eqref{nu}.

In the case $q=1$ for any i.i.d. random variables $\nu_{2q-1}(X_1+\dots+X_N)=\int f_{_N}(x)dx=1,$ where $f_{_N}$ is the density function of the sum $X_1+\dots+X_N.$ However, if $q>1$ then $\nu_{2q-1}(X_1+\dots+X_N)$ does depend on $N,$ and as relation \eqref{nu} shows, the most natural dependence on $N$ can be given by
 the condition
\begin{equation} \label{nu_hyp}
\nu_{2q-1}(X_1+\dots+X_N) \sim O \left( N^{\frac{1-q}{2-q}}\right), \ N \to \infty.   
\end{equation}
Let us consider some examples. If $q=1$ then for any i.i.d. sequence of random variables the relation \eqref{nu} is reduced to the identity $1=1,$ thus satisfying condition \eqref{nu_hyp}. As the above example states, for the type II $q$-i.i.d. $q$-Gaussian random variables relation \eqref{nu} is valid, thus again satisfying condition \eqref{nu_hyp}. One can verify that for $q$-Gaussians independent of type I or III  condition \eqref{nu_hyp} is also verified. Random variables studied in \cite{VignatPlastino2007} also satisfy condition \eqref{nu_hyp} since they are asymptotically equivalent to $q$-indpendent random variables (see \cite{VignatPlastino2007}). As is shown in \cite{HahnJiangUmarov2010} random variables in \cite{VignatPlastino2007} are variance mixtures of normal densities. This gives a strong evidence of the fact that the subclass of variance mixtures of normal densities leading to $q$-Gaussians will also satisfy \eqref{nu_hyp}. For connection of variance mixtures to superstatistics developed by Beck and Cohen \cite{BeckCohen2003} see \cite{HahnJiangUmarov2010}. In our further considerations we assume condition \eqref{nu_hyp} for $q$-i.i.d. random variables $X_1, \dots, X_N.$

The asymptotic expansion of the $q$-exponential function $\exp_q(x)$ near zero
implies that (see Proposition II.3 in \cite{UmarovTsallisGellMannSteinberg2010}, case $\alpha=2$)
\begin{equation} \label{qv}
F_q[X](\xi)=1-\frac{q}{2}QV(X)\xi^2+o(\xi^2), \, \xi \to 0.
\end{equation}
Making use of properties of the $q$-Fourier transform one can see
that (see details in \cite{UmarovTsallisSteinberg2008})
\begin{equation} \label{qv10}
F_q[Z_N](\xi)=1-\frac{q}{2}QV(X_1)\xi^2+o(\frac{\xi^2}{N}), \, N \to
\infty,
\end{equation}
which shows that $QV(Z_N)=QV(X_1), \, N \ge 1.$ Hence, relation \eqref{qv000} is valid not only for $q=1,$ but for all $1<q<2,$ as well. This immediately implies
that if the limit distribution $Z_{\infty}=\lim_{N\to \infty}Z_N$
exists in some sense, then its quasivariance must be equal to $QV(X_1)$,
i.e.
\begin{equation} \label{qv15}
QV(Z_{\infty})=QV(X_1).
\end{equation}

The lemma below will be used in Section \ref{section:uniqueness}.
\begin{lem}\label{lemma:omega}
Let $\omega(x)$ be a continuous function defined on $[0,\infty)$
such that
\begin{enumerate}
\item[(a)] $\omega(1)=0,$
\item[(b)] $\omega(x)>0$ on $(0,1),$ and $\omega(x)<0$ on $(1,\infty),$
\item[(c)] $\int_0^1 \omega (x)dx=\int_1^{\infty}|\omega(x)|dx.$
\end{enumerate}
Then
\begin{equation}\label{est10}
\int_0^1 x^2 \omega(x)dx<\int_1^{\infty}x^2 |\omega(x)|dx.
\end{equation}
\end{lem}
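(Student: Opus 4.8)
The plan is to exploit the sign structure of $\omega$ together with the mass-balance condition (c) by integrating against the test weight $x^2-1$, which changes sign exactly at $x=1$. The key observation is that $(x^2-1)\omega(x)$ has a definite sign on all of $(0,\infty)$: on $(0,1)$ we have $x^2-1<0$ and $\omega(x)>0$, while on $(1,\infty)$ we have $x^2-1>0$ and $\omega(x)<0$; hence $(x^2-1)\omega(x)\le 0$ everywhere, and it is strictly negative on a set of positive measure (by continuity, since $\omega$ is not identically zero near any chosen interior point). Therefore
\begin{equation}\label{prop_plan_1}
\int_0^\infty (x^2-1)\,\omega(x)\,dx < 0.
\end{equation}

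Next I would split the integral in \eqref{prop_plan_1} into the $x^2$-part and the constant part:
\begin{equation}\label{prop_plan_2}
\int_0^\infty x^2 \omega(x)\,dx \;-\; \int_0^\infty \omega(x)\,dx \;<\; 0.
\end{equation}
By condition (b) the second integral is $\int_0^1\omega(x)\,dx - \int_1^\infty |\omega(x)|\,dx$, which vanishes by condition (c); and the first integral is $\int_0^1 x^2\omega(x)\,dx - \int_1^\infty x^2|\omega(x)|\,dx$ again using (b). Substituting these into \eqref{prop_plan_2} gives exactly the claimed inequality \eqref{est10}. One minor technical point to address is integrability: the statement tacitly assumes the integrals $\int_1^\infty |\omega(x)|\,dx$ and $\int_1^\infty x^2|\omega(x)|\,dx$ converge (otherwise \eqref{est10} is read as $+\infty$ on the right and there is nothing to prove, or the hypotheses are vacuous), so I would simply note that under the convergence assumption all manipulations above are legitimate, and the splitting of $\int_0^\infty$ into $\int_0^1+\int_1^\infty$ is valid by additivity.

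I do not expect any serious obstacle here; the only thing requiring a word of care is justifying that the inequality in \eqref{prop_plan_1} is \emph{strict} rather than merely $\le$. This follows because $\omega$ is continuous and, by (b), strictly positive on $(0,1)$ (say at $x=1/2$), so $(x^2-1)\omega(x)$ is continuous and strictly negative in a neighbourhood of $x=1/2$, contributing a strictly negative amount to the integral that cannot be cancelled by the nonpositive contribution elsewhere. Hence the strict inequality propagates to \eqref{est10}.
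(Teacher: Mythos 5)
Your proposal is correct and is essentially the paper's argument in a slightly repackaged form: the paper proves $\int_0^1 x^2\omega\,dx<\int_0^1\omega\,dx$ and $\int_1^\infty|\omega|\,dx<\int_1^\infty x^2|\omega|\,dx$ separately and chains them through condition (c), which is exactly your single inequality $\int_0^\infty(x^2-1)\,\omega(x)\,dx<0$ after expanding and using (c) to kill the constant term. Your added remarks on strictness and on the tacit integrability of $\int_1^\infty x^2|\omega(x)|\,dx$ are sound but do not change the substance.
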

\begin{proof} Since $x^2<1$ for $x \in (0,1),$ one has
\begin{equation} \label{est1}
\int_0^1 x^2 \omega(x)dx < \int_0^1 \omega(x)dx.
\end{equation}
Similarly, for $x>1,$
\begin{equation} \label{est2}
\int_1^{\infty} |\omega(x)|dx < \int_1^{\infty}x^2 |\omega(x)|dx.
\end{equation}
Now condition $(c)$ and estimates \eqref{est1} and \eqref{est2} imply
\eqref{est10}. 
\end{proof}

\section{On the invariance principle and Hilhorst's counterexamples}
\label{section:invariance}
In this Section first we recall the
invariance principle used by Hilhorst \cite{Hilhorst2010} to
construct counterexamples which show that $q$-FT is not invertible.
Then we apply the invariance principle to the $q$-Gaussian and study
properties of densities produced by the invariance principle in this
case. Let $f(x), \, x \in (-\infty,\infty),$ be a symmetric density
function, such that $\lambda(x)=x[f(x)]^{q-1}$ restricted to the
semiaxis $[0,\infty)$ has a unique (local) maximum $m$ at a point
$x_m.$
In other words $\lambda(x)$ has two monotonic pieces,
$\lambda_{-}(x), \, 0 \le x \le x_m,$ and $\lambda_{+}(x), \, x_m
\le x < \infty.$ Let $x_{\pm}(\xi), \, 0 \le \xi \le m,$ denote the
inverses of $\lambda_{\pm}(x),$ respectively. Then the $q$-FT
($1<q<2$) of $f$ can be expressed in the form, see
\cite{Hilhorst2010}
\begin{equation*}
\label{qFT1}
\tilde{f}_q(\xi)=\int_{-\infty}^{\infty}F(\xi')\exp_q(i\xi\xi')d\xi',
\end{equation*}
{where}
\begin{equation}
F(\xi)=\frac{q-2}{q-1}\xi^{\frac{1}{q-1}}\frac{d}{d\xi} \left[ x_{-}^{\frac{q-1}{q-2}}(\xi)-
x_{+}^{\frac{q-1}{q-2}}(\xi) \right], \, \xi \in [0,m].
\end{equation}
Then the invariance principle yields
\begin{equation}
\label{qFT2}
F(\xi)=\frac{q-2}{q-1}\xi^{\frac{1}{q-1}}\frac{d}{d\xi}\left[{X}_{-}^{\frac{q-1}{q-2}}(\xi)-
X_{+}^{\frac{q-1}{q-2}}(\xi)\right], \, \xi \in [0,m],
\end{equation}
where
\begin{equation} \label{Xpm}
X_{\pm}^{\frac{q-1}{q-2}}(\xi)=x_{\pm}^{\frac{q-1}{q-2}}(\xi)+H(\xi),
\end{equation}
with $H(\xi)$ being a function defined on $[0,m],$ and such that
$X_{\pm}(\xi)$ 
are invertible. Denote by
$\Lambda (x)$ the function defined by the two pieces of inverses of
$X_{\pm}(\xi),$ namely
\begin{equation*}
\Lambda_{H} (x)=
\begin{cases}
X_{-}^{-1}(x),            &\text{if $0 \leq x \le x_{m,H}$,}\\
X_{+}^{-1}(x),            &\text{if $x > x_{m,H}$,}
\end{cases}
\end{equation*}
where
$x_{m,H}=[(q-1)^{\frac{q-1}{2(2-q)}}+H(m)]^{-\frac{2-q}{q-1}}.$ The
function $\Lambda_H(x)$ is continuous, since
$X_{-}^{-1}(x_{m,H})=X_{+}^{-1}(x_{m,H}).$ Then
\begin{equation}
\label{fH} f_H(x)=\left(\frac{\Lambda(x)}{x} \right)^{\frac{1}{q-1}}
\end{equation}
defines a density function with the same $q$-FT as of $f.$ The
density $f_H$ coincides with $f$ if $H(\xi)$ is identically zero.

Now assume that $f(x)$ is a $q$-Gaussian,
\begin{equation} \label{qgaussian}
f(x)=G_q(x)=\frac{C_q}{\left[1+(q-1)x^2\right]^{\frac{1}{q-1}}}, \, 1<q<2,
\end{equation}
where $C_q$ is the normalization constant.
Obviously, $G_q(x)$ is symmetric, and the function
$\lambda_q(x)=x[G_q(x)]^{q-1}$ considered on the semiaxis
$[0,\infty)$ has a unique maximum $m=\frac{C_q^{q-1}}{2\sqrt{q-1}}$
attained at the point
\begin{equation}\label{a}
x_m=(q-1)^{-\frac{1}{2}}.
\end{equation}
Moreover, the functions $x_{\pm}(\xi)$
in this case take the forms 
\begin{equation}
\label{pieces1} x_{\pm}(\xi)=\frac{C_q^{q-1}\pm
\left[C_q^{2(q-1)}-4(q-1)\xi^2 \right]^{1 \over 2}}{2\xi(q-1)}, \, 0 < \xi \le m.
\end{equation}
We denote the density $f_H(x)$  and the function $\Lambda_{H}(x)$
corresponding to the $q$-Gaussian by $G_{q,H}(x)$ and
$\Lambda_{q,H}(x),$ respectively.  Hilhorst, selecting $H(\xi)=A \ge
0$ constant, constructed a family of densities
\begin{equation}
\label{fA} G_{q, A}(x)=\frac{C_q
\left( x^{\frac{q-2}{q-1}}-A \right)^{\frac{1}{q-2}}}{x^{\frac{1}{q-1}} \left[1+(q-1)(x^{\frac{q-2}{q-1}}-A)^{2\frac{q-1}{q-2}} \right]^{\frac{1}{q-1}}},
\end{equation}
which
have the same $q$-FT as the $q$-Gaussian for all $A.$ The following
statement shows that none of the densities $G_{q, A}(x)$ can serve as the
limit distribution in the $q$-CLT, except the one, corresponding to
$A=0,$ which coincides with the $q$-Gaussian, $G_{q,0}(x)=G_q(x)$.

\begin{prop}
\label{prop:comsup} Let $H(0) > 0.$ Then the support of $G_{q,
H}(x)$ is compact, and 
$$supp \, G_{q,H}=
\left[-\Big(H(0)\Big)^{\frac{q-1}{q-2}},\Big(H(0)\Big)^{\frac{q-1}{q-2}}\right].
$$
\end{prop}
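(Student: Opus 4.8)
The plan is to read $supp\,G_{q,H}$ directly off the construction (\ref{fH}). Since $G_{q,H}(x)=\big(\Lambda_{q,H}(x)/x\big)^{1/(q-1)}$, with $\Lambda_{q,H}$ glued from the inverses of the two monotone branches $X_{-}$ and $X_{+}$ and with $G_{q,H}$ set to $0$ wherever $\Lambda_{q,H}$ is not defined, and since $G_{q,H}$ is symmetric (it inherits the symmetry of $G_q$), it is enough to describe, for $x>0$, the union $R$ of the ranges of $X_{-}$ and $X_{+}$; then $supp\,G_{q,H}$ is the closure of $R\cup(-R)$. Now $x_{-}(\xi)\to 0$ as $\xi\to 0^{+}$ by (\ref{pieces1}), so $X_{-}$ is an increasing homeomorphism of $[0,m]$ onto $[0,x_{m,H}]$ (using $X_{-}(m)=x_{m,H}$, which is how $x_{m,H}$ was defined, and $x_m=(q-1)^{-1/2}$ from (\ref{a})). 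Since also $X_{+}(m)=x_{m,H}$, everything reduces to computing the single number $x_{\ast}:=\lim_{\xi\to 0^{+}}X_{+}(\xi)$: the range of $X_{+}$ is then the interval between $x_{m,H}$ and $x_{\ast}$, and $R=[0,x_{\ast}]$.

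I would compute $x_{\ast}$ first in the constant case $H\equiv A>0$, where $G_{q,A}$ is given explicitly by (\ref{fA}). That expression is real and strictly positive exactly when $x^{\frac{q-2}{q-1}}-A>0$; because $1<q<2$ gives $\frac{q-2}{q-1}<0$, the map $x\mapsto x^{\frac{q-2}{q-1}}$ strictly decreases $(0,\infty)$ onto $(0,\infty)$, so this condition reads $0<x<A^{\frac{q-1}{q-2}}$. Hence $x_{\ast}=A^{\frac{q-1}{q-2}}$ and $supp\,G_{q,A}=\big[-A^{\frac{q-1}{q-2}},A^{\frac{q-1}{q-2}}\big]$, which is the claimed interval since here $H(0)=A$.

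For a general admissible $H$ with $H(0)>0$ the next step is to show that $x_{\ast}$ depends on $H$ only through $H(0)$, so that it again equals $\big(H(0)\big)^{\frac{q-1}{q-2}}$. The reason is that $x_{+}(\xi)\to+\infty$ as $\xi\to 0^{+}$ by (\ref{pieces1}); in the relation (\ref{Xpm}) that expresses $X_{+}(\xi)$ in terms of $x_{+}(\xi)$ and $H(\xi)$, the $x_{+}$-contribution therefore tends to $0$ as $\xi\to 0^{+}$ (here $1<q<2$ enters once more, through the sign of the exponent), leaving only the value $H(0)$ — the same value one would get from the constant deformation $H\equiv H(0)$. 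With $x_{\ast}=\big(H(0)\big)^{\frac{q-1}{q-2}}$ in hand, $\Lambda_{q,H}$ is defined on $\big[0,(H(0))^{\frac{q-1}{q-2}}\big]$, is valued in $(0,m]$ on the open interval, and vanishes at both endpoints, so $G_{q,H}(x)>0$ for $0<x<(H(0))^{\frac{q-1}{q-2}}$ while $G_{q,H}(x)=0$ for $x\ge (H(0))^{\frac{q-1}{q-2}}$; symmetry then yields $supp\,G_{q,H}=\big[-(H(0))^{\frac{q-1}{q-2}},(H(0))^{\frac{q-1}{q-2}}\big]$.

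I expect the main obstacle to be exactly this last passage from the constant case to a general $H$: one must check with care that the outer branch $X_{+}$ remains continuous and strictly monotone all the way to $\xi=0$, that it sweeps out the whole interval between $x_{m,H}$ and $(H(0))^{\frac{q-1}{q-2}}$ without gaps — in particular that $x_{m,H}<(H(0))^{\frac{q-1}{q-2}}$, which is elementary for constant $H$ and follows from the invertibility requirement on $X_{\pm}$ in general — and that nothing but $H(0)$ survives in the limit $\xi\to 0^{+}$. Once this is established, the extension by zero, the symmetry, and the final reading-off of the support are routine.
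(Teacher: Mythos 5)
Your argument is correct and is essentially the paper's own proof: both hinge on the observation that $x_{+}(\xi)\to+\infty$ as $\xi\to 0^{+}$, so that the negative exponent in \eqref{Xpm} annihilates the $x_{+}$-contribution and $\lim_{\xi\to 0^{+}}X_{+}(\xi)=\big(H(0)\big)^{\frac{q-1}{q-2}}$, while the range of $X_{-}$ fills $[0,x_{m,H}]$ and symmetry yields the stated interval. Your preliminary warm-up on the constant case via \eqref{fA} is a harmless extra consistency check; the paper goes directly to general $H$.
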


\begin{proof} Since $\lim_{\xi \to 0} x_{+}(\xi)=+\infty,$ the
largest value of $X_{+}$ is equal to $\lim_{\xi \to
0}X_{+}(\xi)=[H(0)]^{\frac{q-1}{q-2}}.$ Therefore, the inverse of
$X_{+}$ is defined on the interval $\left[x_0,
[H(0)]^{\frac{q-1}{q-2}} \right],$ where $x_0>0$ is some number obtained by
{\bf a} shifting of $x_m$ depending on $H(m).$ On the other hand the
smallest value of $x_{-}$ is zero, taken at $\xi=0.$ Therefore, the
inverse of $X_{-}$ is defined on the interval $[0,x_0].$ Hence,
 by symmetry, $G_{q,H}$ has the support
$\left[-[H(0)]^{\frac{q-1}{q-2}},[H(0)]^{\frac{q-1}{q-2}}\right].$
\end{proof}
\begin{remark}
{Note that $H(0)$ can not be negative. In fact, if $H(0)<0,$ then
{either} $X_{\pm}$ is not invertible or, if it is invertible, its
inverse does not define a density function.}
\end{remark}
{Proposition \ref{prop:comsup} implies that if $H(0) > 0$ then, due
to Corollary \ref{cor1}, $G_{q,H}(x)$ can not be the density
function of the limit distribution in the $q$-CLT. Thus none of the
densities in Hilhorst's counterexamples\footnote{See Examples 2 and
3 in \cite{Hilhorst2010}. Example 4 is not relevant to the $q$-CLT,
since in this case, $(2q-1)$-variance of the 2-Gaussian does not
exist, and consequently the $q$-CLT is not {applicable}.}, except the
$q$-Gaussian, can serve as an
attractor in the $q$-CLT.}

Only one possibility is left, {namely} $H(0)=0.$ The next
proposition establishes that, in this case, $G_{q,H}(x)$ is
asymptotically equivalent to $G_q(x) \equiv G_{q,0}(x).$

\begin{prop} \label{prop:asequiv}
Let $H(0)=0.$ Then
$$\lim_{|x|\to \infty} \frac{G_{q, H}(x)}{G_q(x)}=1. $$
\end{prop}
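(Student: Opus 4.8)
The plan is to reduce the statement to a comparison of the two ``shape functions'' $\Lambda_{q,H}(x)$ and $\lambda_q(x)=x[G_q(x)]^{q-1}$ that generate $G_{q,H}$ and $G_q$ via \eqref{fH}, and then to exploit the explicit small-argument behaviour of the branch $x_+$ of the $q$-Gaussian. Since $G_q$ is precisely the case $H\equiv0$, we have $\lambda_q(x)=\Lambda_{q,0}(x)$, and \eqref{fH} gives
\[
\frac{G_{q,H}(x)}{G_q(x)}=\left(\frac{\Lambda_{q,H}(x)/x}{\lambda_q(x)/x}\right)^{\frac{1}{q-1}}=\left(\frac{\Lambda_{q,H}(x)}{\lambda_q(x)}\right)^{\frac{1}{q-1}},\qquad x>0 .
\]
Hence it suffices to show $\Lambda_{q,H}(x)/\lambda_q(x)\to1$ as $x\to+\infty$; the limit as $x\to-\infty$ then follows from the symmetry of both densities.

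Next I would localize to the right-hand (decreasing) branch. For $x$ large, $x$ exceeds the maximum location of $\lambda_q$ (namely $x_m=(q-1)^{-1/2}$, see \eqref{a}) and of $\Lambda_{q,H}$ (namely $x_{m,H}$), so $\lambda_q(x)=x_+^{-1}(x)$ and $\Lambda_{q,H}(x)=X_+^{-1}(x)$. Put $\eta(x)=x_+^{-1}(x)$ and $\xi(x)=X_+^{-1}(x)$. Because $x_+(\xi)\to+\infty$ and $X_+(\xi)\to+\infty$ hold \emph{only} as $\xi\to0^+$ — the latter exactly because $H(0)=0$, which is what makes $G_{q,H}$ have noncompact support, as in the proof of Proposition~\ref{prop:comsup} — both $\eta(x)$ and $\xi(x)$ tend to $0$ as $x\to+\infty$. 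From \eqref{Xpm}, these quantities are linked by $x_+^{\frac{q-1}{q-2}}(\eta(x))=x^{\frac{q-1}{q-2}}$ and $x_+^{\frac{q-1}{q-2}}(\xi(x))+H(\xi(x))=x^{\frac{q-1}{q-2}}$, whence
\[
x_+^{\frac{q-1}{q-2}}(\eta(x))-x_+^{\frac{q-1}{q-2}}(\xi(x))=H(\xi(x)) .
\]

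Now I would insert the explicit expansion coming from \eqref{pieces1}: expanding the square root there gives $x_+(\xi)=\frac{C_q^{q-1}}{(q-1)\xi}\bigl(1+O(\xi^2)\bigr)$ as $\xi\to0^+$, so that
\[
x_+^{\frac{q-1}{q-2}}(\xi)=c\,\xi^{\frac{q-1}{2-q}}\bigl(1+o(1)\bigr),\qquad c=\left(\frac{q-1}{C_q^{q-1}}\right)^{\frac{q-1}{2-q}}>0,
\]
with exponent $\frac{q-1}{2-q}>0$; in particular $x_+^{\frac{q-1}{q-2}}$ is strictly increasing near $0$ and behaves there like a fixed positive power of $\xi$. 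Since $H$ is continuous and $H(0)=0$, we have $H(\xi(x))\to0$; feeding this and the power law into the displayed identity, a dominant-balance argument shows that unless $\xi(x)/\eta(x)\to1$ the left-hand side would be of strictly larger order (or the wrong sign) than the right-hand side. Thus $x_+^{\frac{q-1}{q-2}}(\xi(x))\sim x_+^{\frac{q-1}{q-2}}(\eta(x))$, and the power law forces $\xi(x)/\eta(x)\to1$, i.e.\ $\Lambda_{q,H}(x)/\lambda_q(x)\to1$, and finally $G_{q,H}(x)/G_q(x)\to1$.

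The step I expect to be the main obstacle is making the dominant balance rigorous, i.e.\ showing that $H(\xi(x))$ is genuinely negligible against $x_+^{\frac{q-1}{q-2}}(\xi(x))$ — equivalently, that an admissible deforming function must satisfy $H(\xi)=o\bigl(\xi^{\frac{q-1}{2-q}}\bigr)$ as $\xi\to0^+$. This is not a consequence of continuity and $H(0)=0$ alone, so at this point one has to bring in the admissibility of the deformation: that $X_+^{\frac{q-1}{q-2}}=x_+^{\frac{q-1}{q-2}}+H$ stays invertible down to $\xi=0$ and that $f_H$ is a bona fide probability density. Accordingly I would first establish the precise asymptotics of $x_\pm$ near $0$ (and near $x_m$), then translate these admissibility constraints into the required growth bound on $H$ near $\xi=0$, and only then close the estimate above. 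Note that in the concrete families of Hilhorst, where $H$ is constant, the hypothesis $H(0)=0$ already forces $H\equiv0$ and the statement is immediate, so the entire content lies in handling non-constant admissible $H$.
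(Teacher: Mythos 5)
Your reduction is in substance the same as the paper's own route: pass from densities to the generating functions via \eqref{fH}, note that for large $x$ both are given by the inverse right-hand branches, and compare $X_{+}^{-1}$ with $x_{+}^{-1}$ using the explicit small-$\xi$ behaviour of $x_{+}$ from \eqref{pieces1} (your expansion $x_{+}^{\frac{q-1}{q-2}}(\xi)=c\,\xi^{\frac{q-1}{2-q}}(1+o(1))$ is correct). However, as a proof the proposal is incomplete, and you say so yourself: the decisive step is the claim that $H(\xi(x))$ is negligible against $x_{+}^{\frac{q-1}{q-2}}(\xi(x))$, equivalently that $H(\xi)=o\bigl(\xi^{\frac{q-1}{2-q}}\bigr)$ as $\xi\to0^{+}$, and you leave this unproved, deferring it to ``admissibility constraints'' that are never formulated, let alone shown to imply the bound. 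Without that bound your dominant-balance argument does not close, so the proposition is not established by what you wrote.

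Moreover, it is doubtful that admissibility alone can supply the missing bound. Take $H(\xi)=\epsilon\,\xi^{\alpha}$ with $0<\alpha<\frac{q-1}{2-q}$ and $\epsilon$ small: this $H$ is continuous, vanishes at $0$, and keeps $X_{+}^{\frac{q-1}{q-2}}=x_{+}^{\frac{q-1}{q-2}}+H$ strictly increasing and $X_{-}^{\frac{q-1}{q-2}}$ strictly decreasing near $0$ (and, for $\epsilon$ small, on all of $(0,m]$), so nothing in the stated hypotheses excludes it; yet near $\xi=0$ it dominates $x_{+}^{\frac{q-1}{q-2}}(\xi)\asymp\xi^{\frac{q-1}{2-q}}$, and your own balance computation then yields $\Lambda_{q,H}(x)/\lambda_q(x)\to0$ rather than $1$. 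So the step you postpone is not a technicality to be patched later; under the hypotheses as stated it is where the entire content of the statement sits, and to salvage the argument you would need either a strengthened hypothesis on $H$ (e.g. $H(\xi)=o(\xi^{\frac{q-1}{2-q}})$ at $0$) or a concrete structural property of admissible deformations forcing that decay. For comparison, the paper's proof disposes of this point in one line by writing $X_{+}(\xi)/x_{+}(\xi)=1+H(\xi)/x_{+}(\xi)$, i.e. by treating the deformation as additive on $x_{+}$ itself, for which boundedness of $H$ suffices since $x_{+}(\xi)\to\infty$; under the literal relation \eqref{Xpm}, which is what you correctly work with, the ratio is $\bigl(1+H(\xi)/x_{+}^{\frac{q-1}{q-2}}(\xi)\bigr)^{\frac{q-2}{q-1}}$ with $x_{+}^{\frac{q-1}{q-2}}(\xi)\to0$, and the required smallness of $H$ is exactly the unresolved issue (the same issue appears if the exponent in \eqref{Xpm} is read as $\frac{q-2}{q-1}$, as \eqref{fA} and Proposition \ref{prop:comsup} suggest). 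In short: you have located the real crux more precisely than the paper's one-line argument does, but your proposal does not prove the proposition.
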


\begin{proof} {Since} $H(0)=0$, then obviously
\begin{equation*}
\lim_{\xi \to 0} \frac{X_{+}(\xi)}{x_{+}(\xi)}=\lim_{\xi \to 0}
\left(1+\frac{H(\xi)}{x_{+}(\xi)} \right)=1.
\end{equation*}
Therefore, for inverses one has
\begin{equation*}
\lim_{x \to +\infty} \frac{X_{+}^{-1}(x)}{x_{+}^{-1}(x)}=1.
\end{equation*}
This implies
\begin{equation*}
\lim_{x \to +\infty} \frac{G_{q,H}(x)}{G_q(x)}= \lim_{x \to +\infty}
\left(\frac{\frac{X_{+}^{-1}(x)}{x}}{\frac{x_{+}^{-1}(x)}{x}}\right)^{\frac{1}{q-1}}=1.  
\end{equation*}
\end{proof}

\begin{remark}
Propositions \ref{prop:comsup} and \ref{prop:asequiv} establish that
$G_{q,H}$ can identify a limiting distribution in the $q$-CLT only
if $H(0)=0.$ However, in this case, independently {from} other
values of $H(\xi),$ the density $G_{q,H}(x)$ is asymptotically
equivalent to the $q$-Gaussian, i.e. $G_{q,H}(x) \sim G_{q}(x) \, \,
\mbox{as} \, \, |x|\to\infty.$
\end{remark}

The statement of the following proposition can be proved exactly as
Proposition \ref{prop:asequiv}, replacing $X_+(\xi), \, x_+(\xi)$ by
functions $X_-(\xi), \, x_-(\xi),$ respectively.

\begin{prop} \label{prop:asequivat0}
Let $H(0)=0.$ Then
$$\lim_{x\to 0} \frac{G_{q, H}(x)}{G_q(x)}=1. $$
\end{prop}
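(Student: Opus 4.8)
The plan is to carry over the argument of Proposition \ref{prop:asequiv} essentially verbatim, but localized at $x=0$ instead of at $x=\infty$, and with the increasing branch $x_-$ (and its deformation $X_-$) playing the role that $x_+$ (and $X_+$) played there.

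First I would determine the behaviour of $x_-(\xi)$ as $\xi\to 0^+$. Expanding the square root in the explicit formula \eqref{pieces1} to first order shows that the numerator is $O(\xi^2)$ while the denominator is of order $\xi$, so $x_-(\xi)\to 0$; more precisely $x_-(\xi)\sim \xi\,C_q^{-(q-1)}$. Since $1<q<2$ the exponent $\frac{q-1}{q-2}$ appearing in \eqref{Xpm} is negative, hence $x_-^{\frac{q-1}{q-2}}(\xi)\to +\infty$, whereas $H(\xi)\to H(0)=0$ by hypothesis; feeding this into \eqref{Xpm} gives
\[
\lim_{\xi\to 0^+}\frac{X_-(\xi)}{x_-(\xi)}
=\lim_{\xi\to 0^+}\left(1+\frac{H(\xi)}{x_-^{\frac{q-1}{q-2}}(\xi)}\right)^{\frac{q-2}{q-1}}=1,
\]
which is the exact analogue, at $\xi=0$, of the first display in the proof of Proposition \ref{prop:asequiv}.

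Next I would transfer this to the inverse functions. Both $x_-$ and $X_-$ are monotone homeomorphisms of a one-sided neighbourhood of $\xi=0$ onto a one-sided neighbourhood of $x=0$ (invertibility of $X_\pm$ is part of the choice of $H$), so the limit above yields $\lim_{x\to 0^+} X_-^{-1}(x)/x_-^{-1}(x)=1$. Near the origin the piecewise function $\Lambda_{q,H}$ of Section \ref{section:invariance} coincides with $X_-^{-1}$, while for the undeformed $q$-Gaussian $\Lambda_{q,0}=x_-^{-1}=\lambda_q$ on the same neighbourhood; substituting both into \eqref{fH} gives
\[
\lim_{x\to 0^+}\frac{G_{q,H}(x)}{G_q(x)}
=\lim_{x\to 0^+}\left(\frac{X_-^{-1}(x)/x}{x_-^{-1}(x)/x}\right)^{\frac{1}{q-1}}=1,
\]
and the two-sided statement follows at once from the assumed symmetry of the densities about $0$.

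The step I expect to require the most care is the passage from $\lim_{\xi\to 0^+}X_-(\xi)/x_-(\xi)=1$ to the corresponding statement for the inverses: one must check that $x_{m,H}>0$ so that $\Lambda_{q,H}$ is genuinely given by the $X_-$ branch on a full one-sided neighbourhood of $0$, and that the asymptotic equivalence $X_-\sim x_-$ — together with monotonicity and the fact that both functions tend to $0$ — really forces $X_-^{-1}(x)\sim x_-^{-1}(x)$. Once these points are settled the proof is a line-for-line repetition of that of Proposition \ref{prop:asequiv}.
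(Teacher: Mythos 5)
Your proposal is correct and matches the paper's own argument: the paper proves Proposition \ref{prop:asequivat0} simply by repeating the proof of Proposition \ref{prop:asequiv} with $X_+(\xi)$, $x_+(\xi)$ replaced by $X_-(\xi)$, $x_-(\xi)$, which is exactly what you do (with the added, and correct, verification that $x_-(\xi)\sim \xi C_q^{-(q-1)}$ so that $x_-^{\frac{q-1}{q-2}}(\xi)\to\infty$ as $\xi\to 0^+$). The delicate point you flag --- transferring $X_-\sim x_-$ to the inverses --- is glossed over in the paper's proof of Proposition \ref{prop:asequiv} as well, so your treatment is, if anything, slightly more careful than the original.
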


\section{On the uniqueness of the limit distribution}
\label{section:uniqueness}
Let $X$ be a random variable with a symmetric density function $G$ and  let $G_{H}$ be the density function obtained from $G$ by $H$-deformation, where $H(\xi)$ is a continuous function such that $H(0)=0$ and does not change its sign on the interval $(0, x_m).$ Denote by $X_H$ the random variable corresponding to the density function  $G_{H}.$

\begin{lem} \label{lemma_equality}
Let $X$ and $X_H$ be random variables with the respective densities $G$  and $G_{H},$ 
and let $QV(X)=QV(X_H).$ Then 
$\sigma_{2q-1}^2(X)=\sigma_{2q-1}^2(X_H)$
if and only if $H(\xi)$ is identically zero.
\end{lem}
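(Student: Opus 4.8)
\medskip
\noindent\textbf{Proof idea.}\ The plan is to reduce the equality $\sigma_{2q-1}^2(X)=\sigma_{2q-1}^2(X_H)$ to a comparison of the $(2q-1)$-norms $\nu_{2q-1}$, and then to exploit that the $H$-deformation perturbs the density in a ``single crossing'' fashion, which is precisely the situation of Lemma~\ref{lemma:omega}. First I would use $QV=\nu_{2q-1}\,\sigma_{2q-1}^2$ together with the hypothesis $QV(X)=QV(X_H)=:Q$ (which is in fact automatic here, since the $H$-deformation preserves the $q$-FT and $(F_q[\,\cdot\,])''(0)=-q\,QV$), noting that $Q=\int x^2[G(x)]^{2q-1}dx$ is finite and strictly positive. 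Dividing by $Q$ gives
\[
\sigma_{2q-1}^2(X)=\sigma_{2q-1}^2(X_H)\ \Longleftrightarrow\ \nu_{2q-1}(X)=\nu_{2q-1}(X_H).
\]
The implication ``$H\equiv 0\Rightarrow$ equality'' is trivial, since then $X_\pm\equiv x_\pm$, $\Lambda_H\equiv\lambda$ and $G_H\equiv G$. For the converse I would argue by contradiction: assume $H\not\equiv 0$ while $\nu_{2q-1}(X)=\nu_{2q-1}(X_H)$. Since $G$ and $G_H$ are symmetric, set $\omega(x)=[G(x)]^{2q-1}-[G_H(x)]^{2q-1}$ on $[0,\infty)$; then $\omega$ is continuous, $\omega\not\equiv 0$, and the two equalities become $\int_0^\infty\omega=0$ and $\int_0^\infty x^2\omega=0$.

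The core of the argument is the claim that $\omega$ changes sign exactly once on $(0,\infty)$. Since $t\mapsto t^{2q-1}$ and $t\mapsto t^{q-1}$ are increasing on $[0,\infty)$ and $[G(x)]^{q-1}=\lambda(x)/x$, $[G_H(x)]^{q-1}=\Lambda_H(x)/x$, one has $\mathrm{sign}\,\omega(x)=\mathrm{sign}\bigl(\lambda(x)-\Lambda_H(x)\bigr)$, so it suffices to compare the graphs of $\lambda$ and $\Lambda_H$. Put $\gamma=\frac{q-1}{q-2}<0$. From $X_\pm^{\gamma}(\xi)=x_\pm^{\gamma}(\xi)+H(\xi)$ and the fact that $t\mapsto t^{1/\gamma}$ is decreasing, the sign hypothesis on $H$ (say $H\ge 0$; the case $H\le 0$ is symmetric) gives $X_\pm(\xi)\le x_\pm(\xi)$ for every level $\xi$, so each monotone branch of $\Lambda_H$ lies weakly to the left of the corresponding branch of $\lambda$, and in particular $x_{m,H}\le x_m$. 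Tracking level sets I would then obtain $\lambda(x)\le\Lambda_H(x)$ for $x\in(0,x_{m,H}]$ and $\lambda(x)\ge\Lambda_H(x)$ for $x\in[x_m,\infty)$, i.e. $\omega\le 0$ and $\omega\ge 0$ on those intervals. On the transition interval $(x_{m,H},x_m)$ (empty when $H(m)=0$, in which case the crossing is at $x_m$), $\lambda$ is on its increasing branch while $\Lambda_H$ is on its decreasing branch, so $\lambda-\Lambda_H$ is strictly increasing and runs from $\lambda(x_{m,H})-m<0$ to $m-\Lambda_H(x_m)>0$; hence it vanishes at a single point $c$. Altogether $\omega\le 0$ on $(0,c)$ and $\omega\ge 0$ on $(c,\infty)$, with $\omega\not\equiv 0$.

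It then remains to apply Lemma~\ref{lemma:omega} to $\tilde\omega(y):=-\omega(cy)$ (the substitution $x=cy$ normalises the crossing point to $1$). Hypotheses (a) and (b) hold by the previous step, and (c) is exactly $\int_0^c(-\omega)=\int_c^\infty\omega$, which follows from $\int_0^\infty\omega=0$. Lemma~\ref{lemma:omega} then yields $\int_0^\infty x^2\omega>0$ — and, in the mirror case $H\le 0$, $\int_0^\infty x^2\omega<0$ — contradicting $\int_0^\infty x^2\omega=0$. Hence $H\equiv 0$.

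I expect the single-crossing claim to be the main obstacle, with two delicate points. The first is the transition interval $(x_{m,H},x_m)$, where $\lambda$ and $\Lambda_H$ sit on branches of opposite monotonicity; this is where the hypothesis ``$H$ keeps a constant sign'' is genuinely used, since it forces $x_{m,H}\le x_m$ and the monotone picture above. The second is that $\omega$ may vanish on subintervals of $(0,c)$ when $H$ vanishes on part of its domain, so that the sign conditions in Lemma~\ref{lemma:omega} hold only in the non-strict sense; the proof of Lemma~\ref{lemma:omega} nevertheless applies verbatim, because $x^2<c^2$ strictly on $(0,c)$, $x^2>c^2$ strictly on $(c,\infty)$ and $\omega\not\equiv 0$ still force the strict inequality $\int_0^\infty x^2\omega\neq 0$.
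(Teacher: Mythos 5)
Your proposal is correct and follows essentially the same route as the paper's proof: reduce, via $QV=\nu_{2q-1}\,\sigma_{2q-1}^2$ and the hypothesis $QV(X)=QV(X_H)$, to $\nu_{2q-1}(X)=\nu_{2q-1}(X_H)$; show that $G^{2q-1}-G_{H}^{2q-1}$ changes sign exactly once on $(0,\infty)$; normalise the crossing point to $1$; and apply Lemma \ref{lemma:omega} to contradict the equality of quasivariances. The only difference is one of presentation: you derive the single-crossing picture directly from the branch comparison $X_{\pm}\le x_{\pm}$ and explicitly handle the non-strict sign hypotheses in Lemma \ref{lemma:omega}, whereas the paper obtains the same sign configuration from the monotonicity and symmetry of the densities together with Propositions \ref{prop:comsup} and \ref{prop:asequivat0}.
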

\begin{proof} {\it Sufficiency.} Let  $\sigma_{2q-1}^2(X)=\sigma_{2q-1}^2(X_H)$ and assume that $H(\xi)$ is not identically zero. This equality together with $QV(X)=QV(X_H)$ implies that $\nu_{2q-1}(X)=\nu_{2q-1}(X_H).$
Due to conditions on $H(\xi)$ both densities, $G$  and $G_{H},$  are symmetric, decreasing on the positive
semiaxis. Propositions \ref{prop:comsup} and \ref{prop:asequivat0}
imply that  $G(0)=G_H(0)б$ since $H(0)=0.$ Moreover, since both $G$ and
$G_{H}$ are densities there is a point $a>0$ such that $G(a)=G_{H}(a).$
Depending on the sign of $H(\xi),$ we have either
\begin{align}
\label{case_one}
G(x)>G_{H}(x) \quad &\mbox{on the interval} \quad (0,a) \quad \mbox{and} \quad G(x)<G_{H}(x) 
\\ 
\quad &\mbox{on the interval} \quad (a,\infty), \notag
\end{align}
\noindent or
\begin{equation}
\label{case_two} G(x)<G_{H}(x) \quad \mbox{on} \quad  (0,a)  \quad \mbox{and} \quad G(x)>G_{H}(x)  \quad \mbox{on} \quad
(a,\infty).
\end{equation}
If necessary, switching the order of $G$ and $G_{H}$ we can always
assume that condition \eqref{case_one} holds. Notice, that the case $G(x) \equiv G_{H}(x)$ is obviously
excluded, since $H(\xi)$ is not identically zero.
Further, due to Proposition \ref{prop:asequiv}, $G$ and $G_{H}$ share
the same asymptotic behavior at infinity:
$G(x)\sim G_{q}(x), 
\, x \to \infty. $
Since $G$ and $G_{H}$ are symmetric about the origin, it suffices to
consider these functions only for $x\ge 0.$ Furthermore, it follows from \eqref{case_one} that
$G^{2q-1}(x)>G_{H}^{2q-1}(x)$ on the interval $[0,a),$ and $G^{2q-1}(x)<G_{H}^{2q-1}(x)$ on the interval $(a,\infty).$

Consider the function
$\omega(x)=a \left[ G^{2q-1}(ax)-G_{H}^{2q-1}(ax) \right].$
This function $\omega$ is continuous by construction. Moreover,
$\omega(1)=0,$ $\omega(x)>0$ if $x\in (0,1),$ and $\omega(x)<0$ if
$x>1.$ The existence of finite $(2q-1)$-variances of $X$
and $X_H$ implies that $\int_1^{\infty}x^2|\omega(x)|dx <\infty.$
The calculations below, where the symmetry of densities are taken
into account, show that $\omega(x)$ satisfies condition $(c)$ of
Lemma \ref{lemma:omega} as well:
\begin{align*}
2\int_0^1\omega(x)dx&=
a\int_{-1}^1 \left(G^{2q-1}(ax)-G_{H}^{2q-1}(ax)\right)dx=\int_{-a}^a \left(G^{2q-1}(x)-G_{H}^{2q-1}(x)\right)dx\\
&=\int_{-a}^aG^{2q-1}(x)dx-\int_{-a}^aG_{H}^{2q-1}(x)dx\\
&=\nu_{2q-1}(X)-\int_{|x|\ge
a}G^{2q-1}(x)dx
- \left[ \nu_{2q-1}(X_H)-\int_{|x|\ge a}G_{H}^{2q-1}(x)dx \right] \\
&= \int_{|x|\ge
a} \left| G^{2q-1}(x)-G_{H}^{2q-1}(x) \right|dx=2\int_1^{\infty}|\omega(x)|dx.
\end{align*}
Here we have taken into account the equality $\nu_{2q-1}(X)=\nu_{2q-1}(X_H).$ It follows from Lemma \ref{lemma:omega} that
\begin{equation}\label{est100}
\int_0^1x^2\omega(x)dx-\int_1^{\infty}x^2|\omega(x)|dx<0,
\end{equation}
which is equivalent to
\begin{equation}\label{est10100}
\int_0^a x^2 \left( G^{2q-1}(x)-G_{H}^{2q-1}(x) \right)dx-\int_a^{\infty}x^2 \left( G_{H}^{2q-1}(x)-G^{2q-1}(x) \right)dx<0.
\end{equation}
Inequality \eqref{est10100} is the same as
$QV(X)<QV(X_H).$
Switching the order of $G$ and $G_{H}$ in the above
analysis one can see that
\eqref{case_two}  implies $QV(X)>QV(X_H).$
Both obtained relations contradict to equality $QV(X)=QV(X_H).$ Hence, our assumption on $H(\xi)$ is wrong. Thus, we conclude that $H(\xi) \equiv 0.$

The necessity is obvious, since $H(\xi) \equiv 0$ immediately implies $G_{H}= G,$ which consequently yielding $\sigma_{2q-1}^2(X)=\sigma_{2q-1}^2(X_H).$ 
\end{proof}

\begin{thm} 
\label{thm:uniqueness} 
Let $X_N$ be a $q$-independent and identically distributed random variables 
with zero $q$-mean and finite quasivariance. Then the sequence
$Z_N$ defined in \eqref{ZN} has the unique limit distribution.
\end{thm}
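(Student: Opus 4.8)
The strategy is to combine the existence of a limiting distribution (supplied by the $q$-FT machinery of \cite{UmarovTsallisSteinberg2008}) with the rigidity results already established in this paper: Corollary \ref{cor1}, the invariance-principle analysis of Section \ref{section:invariance}, and Lemma \ref{lemma_equality}. First I would recall that, by the $q$-CLT of \cite{UmarovTsallisSteinberg2008}, the sequence $\tilde{f}_q(\xi)$ of $q$-Fourier transforms of $Z_N$ converges pointwise to $e_{q_1}^{-\beta\xi^2}$, which is the $q$-FT of a $q_{-1}$-Gaussian; hence a limit distribution $Z_\infty$ exists in the weak sense, and any candidate density for $Z_\infty$ must have this same $q$-FT. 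So the whole content of the theorem is uniqueness, i.e. that among all densities sharing that $q$-FT, only one can actually arise as the scaling limit.

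\textbf{Key steps, in order.} (1) Record that $Z_\infty$, being a weak limit of the $Z_N$, has a density; by Corollary \ref{cor1} this density cannot have compact support, and in fact is supported on all of $\mathbb{R}$. (2) Record the invariant quantitative constraints on $Z_\infty$: by \eqref{qv15}, $QV(Z_\infty)=QV(X_1)$, and by \eqref{sigma_zn} together with condition \eqref{nu_hyp}, the $(2q-1)$-scale parameter satisfies $\sigma_{2q-1}^2(Z_\infty)=\sigma_{2q-1}^2(X_1)$ — equivalently $\nu_{2q-1}(Z_\infty)$ is pinned down as well. (3) Invoke the structure theorem for densities with a prescribed $q$-FT: by Hilhorst's invariance principle (Section \ref{section:invariance}), any density with the same $q$-FT as the $q_{-1}$-Gaussian $G$ is of the form $G_H$ for some deforming function $H(\xi)$ with $H(0)\ge 0$. (4) Eliminate the case $H(0)>0$: by Proposition \ref{prop:comsup} such a $G_H$ has compact support, contradicting step (1); so $H(0)=0$. (5) For $H(0)=0$, Propositions \ref{prop:asequiv} and \ref{prop:asequivat0} give that $G_H$ is asymptotically equivalent to $G$ at both $0$ and $\infty$, and (after possibly normalizing the sign of $H$) $H$ does not change sign on $(0,x_m)$, so $G$ and $G_H$ satisfy the hypotheses of Lemma \ref{lemma_equality}. (6) Apply Lemma \ref{lemma_equality}: since both $G$ and $G_H$ are viable limits they must share the same quasivariance $QV(X_1)$ \emph{and} the same $(2q-1)$-variance $\sigma_{2q-1}^2(X_1)$ (steps (2)); the Lemma then forces $H\equiv 0$, hence $G_H=G$. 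This shows the limit density is unique, namely the $q_{-1}$-Gaussian, completing the proof.

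\textbf{Main obstacle.} The delicate point is step (2) combined with step (6): I must be sure that \emph{both} scalar invariants — the quasivariance $QV$ and the separate $(2q-1)$-variance $\sigma_{2q-1}^2$ — are genuinely preserved in the limit, not merely along the sequence $Z_N$. The $QV$ invariance is clean because it is read off from $(F_q)''(0)$ via \eqref{qv10}, which passes to the limit. The $\sigma_{2q-1}^2$ invariance, however, rests on \eqref{sigma_zn}, which in turn was derived under the hypothesis \eqref{nu_hyp} on the growth of $\nu_{2q-1}(S_N)$; so the argument is conditional on that hypothesis, and one must be careful that $\nu_{2q-1}(Z_N)\to\nu_{2q-1}(Z_\infty)$ actually holds (e.g. by the uniform control on $M_{q,N}\le M_q$ used already in Theorem \ref{Prop:PW1}, plus the asymptotic equivalence from Proposition \ref{prop:asequiv} which guarantees $G_H$ has a finite $(2q-1)$-variance whenever $G$ does). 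Once both invariants are secured, Lemma \ref{lemma_equality} does the rest and the uniqueness follows; the remaining steps are bookkeeping with results already in hand.
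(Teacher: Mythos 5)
Your proposal is correct and follows essentially the same route as the paper: both arguments pin down the two invariants $QV(Z_\infty)=QV(X_1)$ via \eqref{qv15} and $\sigma_{2q-1}^2(Z_\infty)$ via \eqref{nu_hyp}, and then invoke Lemma \ref{lemma_equality} to force $H\equiv 0$. The only difference is that you make explicit the intermediate steps (ruling out $H(0)>0$ via Proposition \ref{prop:comsup} and Corollary \ref{cor1}, and the reduction of any competing limit density to an $H$-deformation) which the paper leaves implicit in the surrounding discussion of Section \ref{section:invariance}, and you correctly flag that the whole argument is conditional on hypothesis \eqref{nu_hyp}.
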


\begin{proof} The existence of a limit distribution was proved in
\cite{UmarovTsallisSteinberg2008}. Suppose that there are two limit
distributions $Z_{\infty}$ and $Z_{H}$ of the sequence $Z_N$ with respective distinct
densities $G(x)$ and $G_H(x).$ Due to \eqref{qv15}, both
distributions have the same quasivariance
\begin{equation} \label{qv100}
QV(Z_{\infty})=QV(Z_{H})=QV(X_1).
\end{equation}
Moreover, due to condition \eqref{nu_hyp}, 
\begin{align*}
\sigma_{2q-1}^2(Z_N) &= \sigma_{2q-1}^2 \left( \frac{X_1+\dots + X_N}{N^{\frac{1}{2(2-q)}}} \right) 
=\frac{1}{N^{\frac{1}{2-q}}} \sigma_{2q-1}^2 (X_1+\dots + X_N)
\\
&= \frac{1}{N^{\frac{1}{2-q}}} \frac{N QV(X_1)}{\nu_{2q-1}(X_1+\dots+X_N)}
\\
&= \frac{1}{N^{\frac{q-1}{2-q}}} \frac{ QV(X_1)}{\nu_{2q-1}(X_1+\dots+X_N)}  \ \to \ C QV(X_1), \quad \mbox{as} \ N \to \infty,
\end{align*}
where $C$ is a positive constant.
This yields that $\sigma_{2q-1}^2(Z_{\infty})= \sigma_{2q-1}^2(Z_H)=C \nu_{2q-1}(X_1) \sigma_{2q-1}(X_1).$ 
Hence, all the conditions of Lemma \ref{lemma_equality} are satisfied. Thus, $H(\xi) \equiv 0,$ which implies $Z_{\infty}=Z_H,$ that is the uniqueness of the limit distribution.
\end{proof}

\section{Conclusion}
Concluding, we note that with the present results, the gap detected by Hilhorst \cite{Hilhorst2010,Hilhorst2009} in the $q$-Central Limit Theorem \cite{UmarovTsallisSteinberg2008}, has been 
adequately filled. Naturally, this does not imply that other, more general, theorems can not be thought of. For example, the requirement of strict $q$-independence for all $ N$ can obviously be released, by only requiring asymptotic $q$-independence in the $N \to \infty$ limit. It might also be possible theorems similar to Lyapunov-Lindeberg type theorems \cite{Billingsley1995}, or $q$-versions of CLT for weakly dependent random variables with various mixing conditions \cite{Billingsley1995,Peligrad1986,DehlingDenkerPhilipp1986}. 
{The $q$-CLT assumes the finiteness of the quasivariance $QV(X)<\infty.$ The uniqueness of the limiting $q$-L\'evy processes studied in \cite{UmarovTsallisGellMannSteinberg2010} which corresponds to the case $QV(X)=\infty,$ is also a challenging problem. }
Moreover, at the present stage, 
we can not strictly refute existence of dependencies between the $N$ random variables other than $q$-independence, that could also exhibit $q$-Gaussians as attractors in the space of probability distributions. Further efforts along these lines are of course welcome.

\section*{Acknowledgments}
We are thankful to H.J. Hilhorst for initiating discussions on the uniqueness of the limiting distribution in the $q$-CLT, and communicating to us his counterexamples prior to publication. We also
acknowledge worthful remarks from M. Hahn, X. Jiang, and K.P. Nelson. One of us (CT) has  benefitted from partial financial support by CNPq and Faperj (Brazilian agencies), as well as by the John Templeton Foundation (USA).

\end{document}